\renewcommand*{\citet}[2][]{{\cite[#1]{#2}}}
\newtheorem{theorem}{Theorem}
\newtheorem{lemma}[theorem]{Lemma}
\newtheorem{definition}[theorem]{Definition}
\newlength{\figurewidth}
\newlength{\smallfigurewidth}
\title{\large \textbf{Linear-time Minimization of Wheeler DFAs}}
\author{%
Jarno Alanko$^{1,2}$, Nicola Cotumaccio$^{3}$, 
Nicola Prezza$^{4}$ \\[0.5em]
{\small\begin{minipage}{\linewidth}\begin{center}
$^{1}$Dept. of Computer Science, University of Helsinki, Finland, \url{jarno.alanko@helsinki.fi} \\
$^{2}$Faculty of Computer Science, Dalhousie University, Halifax, Canada \\
$^{3}$GSSI, L'Aquila, Italy, \url{nicola.cotumaccio@gssi.it}   \\
$^4$ DAIS, Ca' Foscari University, Venice, Italy, \url{nicola.prezza@unive.it} \\
\end{center}\end{minipage}}
}
\begin{document}

\maketitle              
\vspace{2em}

\thispagestyle{empty}

\begin{abstract}
Wheeler DFAs (WDFAs) are a sub-class of finite-state automata which is playing an important role in the emerging field of \emph{compressed data structures}: as opposed to general automata, WDFAs can be stored in just $\log\sigma + O(1)$ bits per edge, $\sigma$ being the alphabet's size, and support optimal-time pattern matching queries on the substring closure of the language they recognize. An important step to achieve further compression is minimization. When the input $\mathcal A$ is a general deterministic finite-state automaton (DFA), the state-of-the-art is represented by the classic Hopcroft's algorithm, which runs in $O(|\mathcal A|\log |\mathcal A|)$ time. This algorithm stands at the core of the only existing minimization algorithm for Wheeler DFAs, which inherits its complexity. In this work, we show that the minimum WDFA equivalent to a given input WDFA can be computed in linear $O(|\mathcal A|)$ time. 
When run on de Bruijn WDFAs built from real DNA datasets, an implementation of our algorithm reduces the number of nodes from 14\% to 51\% at a speed of more than 1 million nodes per second. 
\end{abstract}
\Section{Introduction}

Minimizing deterministic finite-state automata (DFA) is a classic problem in computer science. The most well-known method solving this problem is the partition refinement algorithm of Hopcroft \cite{hopcroft1971n}, which runs in time $O(|\mathcal A| \log | \mathcal A |)$, where $ |\mathcal A|$ denotes the size of the input automaton. Revuz improved this to linear time for \emph{acyclic} DFAs \cite{revuz1992minimisation}, but no linear-time algoritihm for the general case is known to date. 


In this work, we consider the problem of minimizing Wheeler DFAs (WDFA): given a WDFA, compute the minimum equivalent WDFA. This is a class of automata recently introduced by Gagie et al. \cite{GAGIE201767}
which is gaining a lot of interest in the field of compressed data structures. 
A Wheeler automaton with an alphabet of size $\sigma$ can be represented in just $\log \sigma + O(1)$ bits per edge, and indexed for substring queries with the addition of auxiliary succinct data structures.
Practically-relevant classes of labeled graphs that are always Wheeler include disjoint paths (sets of strings) \cite{mantaci2007extension}, trees \cite{XBWT}, and de Bruijn graphs \cite{BOSS}.

Wheeler DFAs are popular in indexing genomic databases. For example, the compressed index structure of the variation graph toolkit VG \cite{vg} is based on the WDFA of a compacted de Bruijn graph of the input data. Other applications include the colored de Bruijn graph indexes VARI \cite{VARI} and Themisto \cite{themisto}. Efficient WDFA minimization has applications in compressing the space of all these data structures. Note that regular DFA minimization algorithms are not suitable for the problem, because the minimum DFA equivalent to a given WDFA is not necessarily Wheeler (and thus it is not compressible/indexable as efficiently).

\subsection*{Related Work}

There are elements of WDFA minimization in the variation graph indexing tools GCSA and GCSA2 \cite{siren2014indexing, siren2017indexing} used in the VG toolkit \cite{vg}. 
Due to historical reasons, in both GCSA and GCSA2 the transitions of the automata travel in the reverse direction compared to the convention in Wheeler graphs. For clarity, in this section we describe GSCA and GSCA2 using the orientation convention in Wheeler graphs.

GCSA turns a determinized acyclic variation graph $G$ into an equivalent Wheeler DFA $W$ by splitting states until the graph satisfies the properties of a Wheeler graph. During the process, states of $W$ with the same incoming path label sets are merged if they correspond to the same original node in $G$. By construction, this merging preserves the language of the automaton, but it does not merge equivalent states that correspond to distinct nodes in the variation graph.

The GCSA2 data structure builds the WDFA 
of the order-$k$ de Bruijn graph  of the variation graph. 
A de Bruijn graph  (dBg) of a string set is a graph where the nodes represents all substrings of length $k$ ("$k$-mers") of the input. There is an edge from node $u$ to node $v$ if there is a $(k+1)$-mer in the input that is suffixed by the $k$-mer of $u$ and prefixed by the $k$-mer of $v$. The label of the edge is the last character of $v$. 
In GCSA2, the dBg is indexed and minimized similarly to GCSA. As a final step, a more aggressive form of merging is applied --- targeting nodes sharing some incoming suffix --- such that the nodes map to the same \emph{set} of nodes in the variation graph. This may be seen as reducing the order $k$ of the dBg locally.

A core issue with the minimization in GCSA and GCSA2 is that the merging is not based on the fundamental Myhill-Nerode equivalence relation of states \cite{nerode1958linear}. Instead, they on rely on the mapping to the variation graph to detect equivalent states. This acts as a proxy to Myhill-Nerode equivalence, but the method is more restrictive and may fail to detect all equivalent states.

In 2020, Alanko et al. \cite{alankosoda2020} formalized the WDFA minimization problem using the Myhill-Nerode equivalence relation. They characterized the minimum WDFA of an automaton and gave an algorithm to minimize a WDFA in time $O(|\mathcal A| \log |\mathcal A|)$, using the classic Hopcroft's DFA minimization algorithm \cite{hopcroft1971n} as a subroutine. This algorithm stands as the fastest currently known algorithm for WDFA minimization.

In the special case of de Bruijn WDFAs, there is another minimization technique available. 
To turn the de Bruijn graph into a WDFA, additional technical dummy nodes need to be added: these represent prefixes of $k$-mers 
being themselves prefixes of strings in the input set. 
Alanko et al. \cite{alanko2021buffering} showed how to eliminate redundant dummy nodes from the automaton. This may change the language of the automaton but it does not change the substrings of length $k$ of the language.

\subsection*{Our contributions}

In this work, we improve on the WDFA minimization algorithm of Alanko et al. \cite{alankosoda2020}, by eliminating the dependency on Hopcroft's algorithm, and instead exploiting the special structure of a Wheeler automaton to minimize it in \emph{linear time}.

Experimentally, we use our WDFA minimization algorithm to show that the order-$k$ de Bruijn WDFA contains significant redundancies. In particular, we show that on real DNA sequence data, we can compress it by up to 51\%, while maintaining the language of the automaton and the Wheeler property.

\Section{Notation and Preliminaries}

We denote DFAs as quintuples 
$\mathcal{A} = (Q, \Sigma, \delta, s, F)$, where $Q$ is the set of states, $\Sigma = \{0,\dots, |Q|^{O(1)}\}$ is an alphabet of polynomial size, 
$\delta : Q\times \Sigma \rightarrow Q$ is the transition function, $s\in Q$ is the initial (start) state, and $F \subseteq Q$ is the set of final states. 

Given a state of $\mathcal A$, we will denote with $final(u)$ the boolean predicate returning \texttt{true} if and only if $u\in F$.

Note that a DFA $\mathcal{A} = (Q, \Sigma, \delta, s, F)$ can be interpreted as an edge-labeled graph $(Q,E)$, where $E \subseteq Q\times Q \times \Sigma$ is the set of labeled edges  $E = \{(q,q',a)\ |\ \delta(q,a)=q'\}$. We indicate with $|\mathcal A| = |Q| + |E|$ the size of this graph. We denote with $out(u) = \{a\in \Sigma\ |\ \delta(u,a) \in Q\}$ the set of all labels exiting $u$.

We do not assume $\delta$ to be complete: there could exist pairs $q\in Q$, $a\in \Sigma$ for which $\delta(q,a)$ is not defined. From a graph-theoretic perspective this means that, in the labeled graph $(Q,E)$ underlying $\mathcal A$, a state $q$ does not necessarily have one outgoing edge for each alphabet's character. We also assume that all states in $Q \setminus \{s\}$ are reachable from $s$, that there are no incoming edges in $s$, and  that every state is either final or it allows to reach a final state.

The \emph{regular language recognized by a DFA $\mathcal{A}$}, denoted $\mathcal L(\mathcal{A})$, is the set of all strings labeling paths in $(Q,E)$ which connect $s$ with some state in $F$. More formally, let
$\hat\delta : Q\times \Sigma^* \rightarrow Q$ be the extension of $\delta$ defined as $\hat \delta (q, \epsilon ) = q $, where $ \epsilon $ is the empty string, and $\hat\delta(q,\alpha\cdot a) = \delta(\hat\delta(q,\alpha),a)$ for $\alpha \in \Sigma^*$.

Then, $\mathcal L(\mathcal{A}) = \{\alpha \in \Sigma^* \ |\ (\exists q\in F)(\hat\delta(s,\alpha)=q)\}$.

We say that an equivalence relation $ \sim $ on $ Q $ is \emph{right-invariant} if and only if for every $ u, v \in Q $ such that $ u \sim v $ and for every $ a \in \Sigma $, $ \delta (u, a) $ is defined if and only if $ \delta (v, a) $ is defined, and, if so, it holds $ \delta (u, a) \sim \delta (v, a) $. We denote the equivalence class of $v$ with $[v]_\sim$.

Let $ \sim $ be a right invariant equivalence relation on a DFA $\mathcal A $. The \emph{quotient automaton} is defined as $ \mathcal A_{/\sim} = (Q_\sim, \Sigma, \delta_\sim, [s]_\sim, F_\sim)  $, where $ Q_\sim = \{[u]_\sim\ |\  u \in Q \} $, $ \delta_\sim([u]_\sim, a) = [v]_\sim $ if and only if $ \delta (u, a) = v $, and $ F_\sim = \{[u]_\sim\ |\ u \in F \} $. A classic result is that, since $ \sim $ is right-invariant, then $ A_{/\sim} $ is a well-defined DFA such that $ \mathcal{L(A_{/\sim})} = \mathcal{L(A)} $.

We define a special equivalence relation $ \approx_\mathcal{A} $ on $ Q $ as follows: $ u \approx_\mathcal{A} v$ if and only if for every $ \alpha \in \Sigma^* $ we have that $\hat\delta(u,\alpha) \in F$ if and only if $\hat\delta(v,\alpha) \in F$. 
In other words, $ \approx_\mathcal{A} $ is the state version of the classic \emph{Myhill-Nerode} equivalence relation \cite{nerode1958linear}. Note that $ \approx_\mathcal{A} $ is right-invariant. The fundamental \emph{Myhill-Nerode Theorem} \cite{nerode1958linear} states that  $\mathcal A_{/\approx_\mathcal{A}}$ is the minimum (in the number of states) DFA recognizing $\mathcal L(\mathcal A)$. 
The minimum DFA $\mathcal A_{/\approx_\mathcal{A}}$ can be computed in $O(|\mathcal A|\log |\mathcal A|)$ time by means of a classic algorithm described by Hopcroft \cite{hopcroft1971n}.

We assume that elements of $\Sigma$ are totally-sorted according to the standard integer order, which we denote here by $\preceq$ when referring to elements of $\Sigma$.

A \emph{Wheeler DFA} (WDFA for brevity) \cite{GAGIE201767,alankosoda2020} $\mathcal A$ is a DFA for which there exists a \emph{total} order $\le\ 
\subseteq Q\times Q$ (called \emph{Wheeler order}) satisfying the following three axioms (in the following, $ u < v $ means $ u \le v $ and $ u \not = v $):

\begin{itemize}
    \item[] \emph{(i)} $s \leq u$ for every $u\in Q$.
    \item[] \emph{(ii)} If $u' = \delta(u,a)$, $v' = \delta(v,b)$, and $a \prec b$, then $u'<v'$.
    \item[] \emph{(iii)} If $u' = \delta(u,a)$, $v' = \delta(v,b)$, $a = b$, and $u<v$, then $u'\le v'$.
\end{itemize}


In \cite{alankosoda2020} it was showed that (1) if $ \mathcal{A} $ is a WDFA, then a Wheeler order $ \le $ on $ \mathcal{A} $ is uniquely determined (that is, $ \le $ is \emph{the} Wheeler order on $ \mathcal{A} $), and (2) if $ \mathcal{L} $ is a regular language recognized by some WDFA, then a WDFA $ \mathcal{A'} $ recognizing $ \mathcal{L} $ and having the minimum number of states is unique up to isomorphism (that is, $ \mathcal{A}' $ is the \emph{minimum} WDFA recognizing $ \mathcal{L} $).

Wheeler axioms imply the \emph{input-consistency} property: if $\delta(q,a) = \delta(q',a')$, then $a=a'$. From a graph-theoretic perspective this means that, in the labeled graph $(Q,E)$ underlying $\mathcal A$, all edges entering the same state bear the same label.
With $\lambda(u) = a$, for $u\in Q$, we denote the unique $a\in \Sigma$ with $\delta(u',a)=u$ for all predecessors $u'$ of $u$. For $s$ we use the convention $\lambda(s) = \# \notin \Sigma$, where $\# \prec a$ for all $a\in \Sigma$.

\Section{Linear-time Minimization of WDFAs}



The Wheeler DFA minimization problem was first addressed by Alanko et al. in \cite{alankosoda2020}:

\vspace{5pt}
\noindent
\textbf{Problem: WDFA minimization} \cite{alankosoda2020}. \emph{Given a WDFA $\mathcal A$, compute the smallest (minimum number of states) WDFA $\mathcal A'$ such that $\mathcal L(\mathcal A') = \mathcal L(\mathcal A)$.}

\vspace{5pt}
We note that it is not important whether or not the input WDFA is sorted (that is, whether or not the Wheeler order $\le$ is given as a part of the input), since WDFAs can be sorted in linear time \cite{alankosoda2020} (assuming a polynomial integer alphabet, as we do in the present paper).

Alanko et al. in \cite{alankosoda2020} presented an algorithm solving the WDFA minimization problem in $O(|\mathcal A|\log |\mathcal A|)$ time. The main bottleneck of this algorithm is represented by a call to Hopcroft's algorithm. After the Myhill-Nerode equivalence classes of $\mathcal A$ ($[u]_{\approx_{\mathcal A}}$, for $u\in Q$) have been computed, the minimum WDFA $\mathcal A'$ can be derived in linear time by means of the following lemma:

\begin{lemma}[minimum WDFA \cite{alankosoda2020}]\label{lem:minimum WDFA}
	Let $ \mathcal{A} = (Q, \Sigma, \delta, s, F)  $ be a Wheeler DFA, let $\le$ be the Wheeler order on $ \mathcal{A} $, and write $ Q = \{u_1, u_2, \dots, u_n \} $, with $ u_1 < u_2 < \dots < u_n $. Let $ \equiv_\mathcal{A} $ be the equivalence relation on $ Q $ that puts in the same equivalence classes exactly all states belonging to the maximum runs of states $ u_i, u_{i + 1}, \dots, u_{i + t} $ such that: (1) $ \lambda (u_i) = \lambda (u_{i + 1}) = \dots = \lambda (u_{i + t}) $, and (2) $ u_i \approx_\mathcal{A} u_{i + 1} \approx_\mathcal{A} \dots \approx_\mathcal{A} u_{i + t} $.
	Then, $ \equiv_\mathcal{A} $ is right-invariant and $ \mathcal{A}_{/\equiv_\mathcal{A}} $ is the minimum WDFA recognizing $ \mathcal{L(A)} $.
\end{lemma}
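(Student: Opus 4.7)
My plan is to prove three things in sequence: (a) right-invariance of $\equiv_\mathcal{A}$, (b) that $\mathcal{A}_{/\equiv_\mathcal{A}}$ is itself a WDFA recognizing $\mathcal{L}(\mathcal{A})$, and (c) that no WDFA for $\mathcal{L}(\mathcal{A})$ has strictly fewer states.

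For (a), I would fix $u \equiv_\mathcal{A} v$ with $u \le v$ in the Wheeler order and fix $a \in \Sigma$. Since $u \approx_\mathcal{A} v$, $\delta(u,a)$ is defined iff $\delta(v,a)$ is; setting $u' = \delta(u,a)$ and $v' = \delta(v,a)$, input-consistency gives $\lambda(u') = \lambda(v') = a$ and axiom (iii) gives $u' \le v'$. It suffices to show every $w'$ with $u' \le w' \le v'$ lies in the same $\equiv_\mathcal{A}$-class as $u'$. By axiom (ii) the states with $\lambda = a$ form a contiguous Wheeler interval, so $\lambda(w') = a$ and $w'$ admits a predecessor $w$ via an $a$-edge. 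If $w < u$, axiom (iii) gives $w' \le u'$, hence $w' = u'$; symmetrically if $w > v$; otherwise $u \le w \le v$ lies in the same $\equiv_\mathcal{A}$-run as $u,v$, so $w \approx_\mathcal{A} u$ and right-invariance of $\approx_\mathcal{A}$ gives $w' \approx_\mathcal{A} u'$. Step (b) is then immediate: each $\equiv_\mathcal{A}$-class is a contiguous Wheeler interval by construction, so the Wheeler order descends to a well-defined total order on $Q_{\equiv_\mathcal{A}}$, and axioms (i)--(iii) for $\mathcal{A}_{/\equiv_\mathcal{A}}$ are inherited from those of $\mathcal{A}$; the language equality is the standard quotient-by-right-invariant fact.

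The hard part is (c). Given any WDFA $\mathcal{B}$ with $\mathcal{L}(\mathcal{B}) = \mathcal{L}(\mathcal{A})$, I would construct a \emph{surjection} $\phi : Q_\mathcal{B} \to Q_{\mathcal{A}/\equiv_\mathcal{A}}$ (note the direction: this yields the desired $|Q_\mathcal{B}| \ge |Q_{\mathcal{A}/\equiv_\mathcal{A}}|$) by setting $\phi(q) = [\hat\delta_\mathcal{A}(s,\alpha)]_{\equiv_\mathcal{A}}$ for any $\alpha$ reaching $q$ in $\mathcal{B}$; such $\alpha$ always also reaches some state of $\mathcal{A}$, since both automata are trim and $\alpha$ is a prefix of an accepted word. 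Surjectivity is immediate because every $u \in Q$ is reached by some $\alpha$, which reaches some $q \in Q_\mathcal{B}$ with $\phi(q) = [u]_{\equiv_\mathcal{A}}$. The key obstacle is well-definedness: if $\alpha,\beta$ both reach the same $q \in Q_\mathcal{B}$ but reach $u,v$ in $\mathcal{A}$ with (WLOG) $u \le v$, one must derive $u \equiv_\mathcal{A} v$. Myhill--Nerode for $\mathcal{L}(\mathcal{A}) = \mathcal{L}(\mathcal{B})$ already gives $u \approx_\mathcal{A} v$, and input-consistency of $\mathcal{B}$ gives $\lambda(u) = \lambda(v) =: a$. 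To finish, one invokes the standard Wheeler-graph correspondence (a short induction on string length using axioms (ii) and (iii)): in any WDFA, if $\gamma \le_{\mathrm{colex}} \gamma'$ then the states $x, x'$ reached by $\gamma, \gamma'$ satisfy $x \le x'$. For any $w$ with $u \le w \le v$ in $\mathcal{A}$ and any string $\gamma$ reaching $w$, applying the correspondence inside $\mathcal{A}$ forces $\alpha \le_{\mathrm{colex}} \gamma \le_{\mathrm{colex}} \beta$, and applying it inside $\mathcal{B}$ then forces $\gamma$ to reach a state in $[q,q]$, i.e.\ $\gamma$ reaches exactly $q$; Myhill--Nerode then yields $w \approx_\mathcal{A} u$, while axiom (ii) on $\mathcal{A}$ gives $\lambda(w) = a$. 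Thus the Wheeler interval from $u$ to $v$ is a single $\equiv_\mathcal{A}$-run, so $u \equiv_\mathcal{A} v$, $\phi$ is well-defined, and combining the resulting surjection with (b) shows $\mathcal{A}_{/\equiv_\mathcal{A}}$ has the minimum number of states among all WDFAs for $\mathcal{L}(\mathcal{A})$.
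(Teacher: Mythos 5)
The paper does not actually prove this lemma: it is imported verbatim from Alanko et al.~\cite{alankosoda2020}, so there is no in-paper argument to compare against. Judged on its own, your proof is correct and essentially self-contained, and it is a reasonable reconstruction of what the cited result requires. Part (a) works: the observation that every state $w'$ strictly between $u'=\delta(u,a)$ and $v'=\delta(v,a)$ must have an $a$-predecessor lying in $[u,v]$ (the cases $w<u$ and $w>v$ collapse $w'$ onto $u'$ or $v'$ by axiom (iii)) is exactly the right way to see that the image of a run is contained in a run. Part (c) is the genuinely hard step and your surjection $\phi:Q_{\mathcal B}\to Q_{\mathcal A/\equiv_{\mathcal A}}$ is sound; the crucial well-definedness argument correctly combines Myhill--Nerode (same right-language since $\alpha,\beta$ meet at $q$), input-consistency of $\mathcal B$ (same last letter), and the colexicographic correspondence to trap every string reaching an intermediate state $w$ inside $I_q^{\mathcal B}$. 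Two points you lean on without spelling out, both repairable in a line each: (i) the colex correspondence (``$\gamma\preceq_{\mathrm{colex}}\gamma'$ implies $x\le x'$'') is invoked as standard --- it is, and your sketched induction on length via axioms (ii)/(iii) with axiom (i) as the base case does prove it, but it is the load-bearing lemma of part (c) and deserves an explicit statement; (ii) when you apply it inside $\mathcal B$ you need $\hat\delta_{\mathcal B}(s,\gamma)$ to be defined at all, which follows from trimness of $\mathcal A$ ($\gamma$ extends to an accepted word) --- you state this for $\alpha$ but should also note it for the intermediate $\gamma$. Finally, the lemma speaks of \emph{the} minimum WDFA; you establish minimality of the state count but not uniqueness up to isomorphism, which the paper imports separately from \cite{alankosoda2020} (your $\phi$, restricted to a minimum-size $\mathcal B$, would in fact yield the isomorphism with a little extra work).
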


In other words, in order to find the smallest WDFA one needs to identify maximal runs of consecutive states (in Wheeler order) that are Myhill-Nerode equivalent and that are reached by the same label. 
The bottleneck of a direct implementation of this procedure 
is the computation of $\approx_\mathcal{A}$, which in general takes  $O(|\mathcal A|\log |\mathcal A|)$ time using Hopcroft's algorithm.

\subsection*{Our Algorithm}

In this section we present a linear-time algorithm for the WDFA minimization problem. 
The idea behind our algorithm is to use the characterization of the minimum WDFA provided by Lemma \ref{lem:minimum WDFA} and exploit the following observation: since classes of $\equiv_\mathcal{A}$ form intervals in the Wheeler order, it is sufficient to identify \emph{borders} $(u_i,u_{i+1})$ between these intervals in order to reconstruct $\equiv_\mathcal{A}$ (and thus the minimum WDFA). 
Let $u_1 < u_2 < \dots < u_n$ be the Wheeler order. 
The borders between classes of $\equiv_\mathcal{A}$ can be found efficiently by exploiting the properties stated in the following two lemmas:

\begin{lemma}\label{lem:adjacent}
For any string $\alpha$, if $v = \hat\delta(u_i,\alpha)$,  $v' = \hat\delta(u_{i+1},\alpha)$ are both defined and $v\neq v'$, then $v = u_j$ and $v' = u_{j+1}$ for some $1 \leq j < n$, that is, $v$ and $v'$ are also adjacent in the Wheeler order.
\end{lemma}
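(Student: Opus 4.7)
The plan is to prove the statement by induction on the length of $\alpha$. The base case $\alpha=\epsilon$ is immediate since $v=u_i$ and $v'=u_{i+1}$ are Wheeler-adjacent by assumption. For the inductive step, I would write $\alpha=\beta a$ and set $w_1=\hat\delta(u_i,\beta)$, $w_2=\hat\delta(u_{i+1},\beta)$, both of which must be defined since $v$ and $v'$ are. Applying the inductive hypothesis to $\beta$, either $w_1=w_2$ (which forces $v=v'$, contradicting our assumption $v\neq v'$) or $w_1$ and $w_2$ are Wheeler-adjacent, say $w_1=u_j$ and $w_2=u_{j+1}$. Then $v=\delta(u_j,a)$ and $v'=\delta(u_{j+1},a)$. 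By axiom (iii) we get $v\le v'$, and since $v\neq v'$ we have $v<v'$ outright.

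It remains to show that no state lies strictly between $v$ and $v'$ in the Wheeler order. I would argue by contradiction: suppose there exists $w\in Q$ with $v<w<v'$. Since $s$ is the minimum and $v<w$, we have $w\neq s$, so $w$ has at least one predecessor; let $u_k$ be one, say with $\delta(u_k,b)=w$, so that $b=\lambda(w)$. Note also that $\lambda(v)=\lambda(v')=a$ by input-consistency (both are reached by label $a$).

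The main obstacle is verifying that $b=a$ and then extracting a contradiction from axiom (iii); this is where the adjacency of $u_j$ and $u_{j+1}$ must be used. I would first rule out $b\neq a$ using axiom (ii): if $b\prec a$, then from $\delta(u_k,b)=w$ and $\delta(u_j,a)=v$ we would get $w<v$, contradicting $v<w$; symmetrically $a\prec b$ yields $v'<w$, contradicting $w<v'$. Hence $b=a$. Now I apply axiom (iii) twice with the common label $a$: comparing the edges into $v$ and $w$, since $v<w$ we must have $u_j<u_k$ (the alternatives $u_k<u_j$ or $u_k=u_j$ would force $w\le v$ or $w=v$); comparing the edges into $w$ and $v'$, since $w<v'$ we must have $u_k<u_{j+1}$. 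Combining, $u_j<u_k<u_{j+1}$, which contradicts the fact that $u_j$ and $u_{j+1}$ are consecutive in the Wheeler order. This contradiction shows that $v$ and $v'$ are adjacent, completing the induction.
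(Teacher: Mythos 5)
Your proof is correct and takes essentially the same approach as the paper's: reduce to the single-character case (you by explicit induction, the paper by a ``WLOG and extend'' remark), then rule out an intermediate state $w$ by showing via Axiom (ii) that its incoming label must be $a$, and via Axiom (iii) that any $a$-predecessor of $w$ would have to lie strictly between the adjacent states $u_j$ and $u_{j+1}$, a contradiction.
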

\begin{proof}
Without loss of generality, we can assume that $\alpha = a \in \Sigma$ is a character (the claim will follow by extension). 
By Wheeler Axiom (iii) and since $v\neq v'$, it must be $v < v'$.
Assume, for a contradiction, that $v=u_j$ and $v'=u_{j'}$ with $j' > j+1$. Let therefore $v''$ be any node such that $v < v'' < v'$. By Wheeler Axiom (ii), it must be $\lambda(v) = \lambda(v'') = \lambda(v') = a$. Let therefore $u'$ be any $a$-predecessor of $v''$: $\delta(u',a) = v''$. Since $\mathcal A$ is a DFA and $v \neq v''\neq v'$, it must be the case that $u_i \neq u' \neq u_{i+1}$. Since we also know that $u_i$ and $u_{i+1}$ are adjacent in Wheeler order, we therefore have two cases: either $u'<u_i$ or $u'>u_{i+1}$. If $u'<u_i$, then by Wheeler Axiom (iii) it must be $v'' < v$, a contradiction. Similarly, if $u'>u_{i+1}$ then by Wheeler Axiom (iii) it must be $v'' > v'$, again a contradiction.
\end{proof}

The second property follows directly from the right-invariance of $\equiv_\mathcal{A}$: 

\begin{lemma}\label{lem:borders}
If $u_j \not\equiv_\mathcal{A} u_{j+1}$ (i.e. $(u_j,u_{j+1})$ is a border), then all pairs $(u_i,u_{i+1})$ such that $\hat\delta(u_i,\alpha) = u_j$ and $\hat\delta(u_{i+1},\alpha) = u_{j+1}$ for some string $\alpha\in\Sigma^*$, are also such that $u_{i} \not\equiv_\mathcal{A} u_{i+1}$ (that is, they are borders). 
\end{lemma}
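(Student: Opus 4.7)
The approach I would take is a short proof by contradiction (or, equivalently, by contrapositive) that leverages only the right-invariance of $\equiv_\mathcal{A}$ asserted in Lemma \ref{lem:minimum WDFA}. Specifically, I would assume toward a contradiction that $u_i \equiv_\mathcal{A} u_{i+1}$ and then show that this forces $u_j \equiv_\mathcal{A} u_{j+1}$, violating the hypothesis that $(u_j,u_{j+1})$ is a border. Lemma \ref{lem:adjacent} and the Wheeler axioms are not needed for this argument: the statement is purely about the interaction between right-invariance and the extended transition function $\hat\delta$.

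The first key step is to promote right-invariance from single characters (as stated in the definition) to arbitrary strings $\alpha \in \Sigma^*$. This is a routine induction on $|\alpha|$: the base case $\alpha = \epsilon$ is trivial; for the inductive step, write $\alpha = \beta a$, apply the inductive hypothesis to obtain $\hat\delta(u_i,\beta) \equiv_\mathcal{A} \hat\delta(u_{i+1},\beta)$ (and that both are simultaneously defined), and then apply the single-character right-invariance to conclude $\hat\delta(u_i,\alpha) \equiv_\mathcal{A} \hat\delta(u_{i+1},\alpha)$. The second step is to instantiate this extended right-invariance with the specific string $\alpha$ given in the hypothesis: substituting $\hat\delta(u_i,\alpha) = u_j$ and $\hat\delta(u_{i+1},\alpha) = u_{j+1}$ yields $u_j \equiv_\mathcal{A} u_{j+1}$, directly contradicting the assumption that $(u_j,u_{j+1})$ is a border. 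Therefore $u_i \not\equiv_\mathcal{A} u_{i+1}$, as required.

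There is no real obstacle here; the only mildly delicate point is that the definition of right-invariance given in the preliminaries is phrased for a single symbol $a \in \Sigma$, so one must be explicit that the inductive extension also preserves the "defined on both or neither" clause. Since the hypothesis supplies that both $\hat\delta(u_i,\alpha)$ and $\hat\delta(u_{i+1},\alpha)$ are defined (they equal $u_j$ and $u_{j+1}$, respectively), this definedness issue does not actually interfere with the argument. The proof therefore reduces to a one-line invocation of right-invariance once the extension to strings is in place.
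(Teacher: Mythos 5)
Your proof is correct and matches the paper's intent exactly: the paper gives no separate proof of this lemma, merely noting that it ``follows directly from the right-invariance of $\equiv_\mathcal{A}$'' (which Lemma~\ref{lem:minimum WDFA} guarantees), and your contrapositive argument with the routine induction extending right-invariance from single characters to strings is precisely the omitted justification.
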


These properties naturally suggest a linear-time reachability algorithm for the problem: first, mark all ``base-case'' borders (below we formalize this notion). Then, mark also all borders $(u_i,u_{i+1})$ that lead to a marked ``base-case'' border $(u_j,u_{j+1})$ through some string $\alpha$, that is, such that $\hat\delta(u_i,\alpha) = u_j$ and $\hat\delta(u_{i+1},\alpha) = u_{j+1}$. 
Lemma \ref{lem:adjacent} guarantees that for every string $ \alpha $ states $\hat\delta(u_i,\alpha)$ and $\hat\delta(u_{i+1},\alpha)$, if distinct, are indeed adjacent in Wheeler order.

In order to formalize this reasoning, let us define the \emph{border graph} of a WDFA: 

\begin{definition}[border graph of a WDFA]
Let $\mathcal A$ be a WDFA. The border graph of $\mathcal A$ is the (unlabeled) graph $ \mathcal B(\mathcal A) = (B, Z)$ where  $ B = \{(u_i, u_{i + 1})\ |\ 1 \le i < n, \lambda (u_i) = \lambda (u_{i + 1}) \} $ and $ Z = \{((u_i, u_{i + 1}), (u_j, u_{j + 1})) \in B \times B\ |\  u_i = \delta (u_j, \lambda (u_i))\ \wedge\ u_{i + 1} = \delta (u_{j + 1}, \lambda (u_i)) \} $.
\end{definition}

In other words, an edge of $ \mathcal B(\mathcal A)$ exists between borders $(u_i, u_{i + 1})$ and $(u_j, u_{j + 1})$ whenever $u_i$ (respectively, $u_{i+1}$) can be reached by $u_j$ (respectively, $u_{j+1}$) by an edge labeled $a = \lambda(u_i) = \lambda(u_{i+1})$.

In general, $ \mathcal B(\mathcal A)$ may contain cycles. 
With the next lemma we put a bound to the size of $ \mathcal B(\mathcal A) $, by showing that the maximum out-degree in the graph is at most 1.

\begin{lemma}\label{lem: border graph}
$ \mathcal B(\mathcal A) $ has at most $n-1$ edges and $n-1$ vertices, where $n$ is the number of states of $\mathcal A$. Moreover, $ \mathcal B(\mathcal A) $  can be constructed in $O(|\mathcal A|)$ time given $\mathcal A$ as input. 
\end{lemma}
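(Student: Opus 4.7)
The vertex bound is immediate: $B$ consists of a subset of the $n-1$ pairs $(u_i,u_{i+1})$ with $1\le i<n$, so $|B|\le n-1$. For the edge bound I will show that every vertex of $\mathcal B(\mathcal A)$ has out-degree at most $1$; combined with $|B|\le n-1$ this at once gives $|Z|\le n-1$. Fix a border $(u_i,u_{i+1})$ with common incoming label $a=\lambda(u_i)=\lambda(u_{i+1})$ and suppose, for contradiction, that it has two distinct out-neighbors $(u_j,u_{j+1})$ and $(u_{j'},u_{j'+1})$ with $j<j'$. By definition of $Z$, $\delta(u_k,a)=u_i$ and $\delta(u_{k+1},a)=u_{i+1}$ for both $k=j$ and $k=j'$. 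If $j+1=j'$, then $u_{j+1}=u_{j'}$, whence $u_{i+1}=\delta(u_{j+1},a)=\delta(u_{j'},a)=u_i$, contradicting $u_i\ne u_{i+1}$. Otherwise $u_{j+1}<u_{j'}$, and Wheeler Axiom~(iii) yields $u_{i+1}=\delta(u_{j+1},a)\le \delta(u_{j'},a)=u_i$, contradicting $u_i<u_{i+1}$.

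For the construction, I assume the Wheeler order is available (otherwise sort $\mathcal A$ in linear time using the algorithm of~\cite{alankosoda2020}). A single scan over $u_1,\dots,u_n$ enumerates $B$ by testing whether $\lambda(u_i)=\lambda(u_{i+1})$. To build $Z$, I first precompute, for every state $u$, the Wheeler-minimum and Wheeler-maximum of its predecessor set; by input-consistency these coincide with its minimum/maximum $\lambda(u)$-predecessor, and they can be computed in $O(|\mathcal A|)$ total time by a single pass over all edges. Then, for each border $(u_i,u_{i+1})$ with label $a$, let $u_q$ be the maximum $a$-predecessor of $u_i$ and $u_{p'}$ the minimum $a$-predecessor of $u_{i+1}$ (if either does not exist, there is no out-edge and we move on). A straightforward repetition of the Wheeler Axiom~(iii) argument above shows that \emph{any} out-neighbor of $(u_i,u_{i+1})$ must in fact equal $(u_q,u_{p'})$: a larger $a$-predecessor of $u_i$ sitting between $u_j$ and $u_{j+1}$, or to the right of $u_{j+1}$, collides with $\delta(u_{j+1},a)=u_{i+1}>u_i$ under Axiom~(iii), and symmetrically for $u_{p'}$. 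Hence the unique out-edge (if any) is identified in $O(1)$ by checking whether $p'=q+1$ and $\lambda(u_q)=\lambda(u_{p'})$. Summed over the at most $n-1$ borders this costs $O(n)$, giving an overall $O(|\mathcal A|)$ construction time.

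The only real subtlety is the claim that the extremal predecessors $u_q$ and $u_{p'}$ already pin down the out-neighbor. I expect the cleanest presentation to simply reuse the $j<j'$ contradiction from the out-degree proof, applied to hypothetical competing predecessors, rather than to prove a separate ``$a$-predecessors form a Wheeler interval'' lemma, which is only true modulo states whose $a$-transition is undefined and would clutter the argument.
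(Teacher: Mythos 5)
Your proof is correct and the core counting argument is essentially the paper's: both bound $|Z|$ by showing each vertex of $\mathcal B(\mathcal A)$ has out-degree at most one, and both derive this from Wheeler Axiom~(iii) via the observation that any out-neighbor $(u_j,u_{j+1})$ is forced to be the pair (maximum predecessor of $u_i$, minimum predecessor of $u_{i+1}$) --- the paper states this directly as a characterization of when $(u_r,u_s)\in B$, while you phrase it as a two-case contradiction, which is a cosmetic difference. The only genuine divergence is in the construction: the paper walks \emph{forward}, iterating over each border $(u_i,u_{i+1})$ and each outgoing label $a$, computing $v=\delta(u_i,a)$ and $v'=\delta(u_{i+1},a)$, and invoking Lemma~\ref{lem:adjacent} to conclude that $(v,v')$ is itself an adjacent pair (the target border); you walk \emph{backward}, precomputing each state's extremal predecessors in one pass over the edges and then deciding each border's unique out-edge in $O(1)$. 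Both are $O(|\mathcal A|)$; the paper's version reuses Lemma~\ref{lem:adjacent} and needs no auxiliary pass, whereas yours is self-contained (it never needs the adjacency-of-images lemma) at the cost of the extremal-predecessor precomputation and the small argument that these extremal predecessors pin down the out-neighbor --- which, as you note, is just the Axiom~(iii) argument again.
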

\begin{proof}
Clearly, $|B| \leq n-1$ since elements of $B$ are pairs of adjacent (in Wheeler order) states of $\mathcal A$. 

We now show that for every $ (u_i, u_{i + 1}) \in B $, there exists at most one $ (u_j, u_{j + 1}) \in B $ such that $ ((u_i, u_{i + 1}), (u_j, u_{j + 1})) \in Z $.
Indeed, if $ u_r, u_s \in Q $ are any states such that $ u_i = \delta (u_r, a) $ and $ u_{i + 1} = \delta (u_s, a) $, where $ a = \lambda (u_i) = \lambda (u_{i + 1}) $, then from $ u_i < u_{i + 1} $ and from Wheeler Axiom (iii) it follows $ u_r < u_s $ (equality cannot hold, $\mathcal A$ being a DFA), and $ (u_r, u_s) \in B$ if and only if $ r $ is the largest integer such that $ u_i = \delta (u_r, a) $, $ s $ is the smallest integer such that $ u_{i + 1} = \delta (u_s, a) $, $ s = r + 1 $ and $ \lambda (u_r) = \lambda (u_s) $. In other words, $ |Z| \le |B| \le n - 1 $.

Finally,  $ \mathcal B(\mathcal A) $ can be built in $ O (|\mathcal A|) $ time as follows. 
Consider the list $u_1 < \dots < u_n$ of $\mathcal A$'s states, sorted in Wheeler order (sorting WDFAs takes linear time \cite{Alanko2020Wheeler}).
For each $(u_i,u_{i+1})$ with $\lambda(u_i) = \lambda(u_{i+1})$ and each letter $a$ labeling a transition leaving $u_i$, let $v = \delta(u_i,a)$ and $v' = \delta(u_{i+1},a)$ (note that the outgoing edges of each node can be sorted in linear time by their label to speed up this operation). If both $v$ and $v'$ exist and are distinct, they must indeed be adjacent in Wheeler order by Lemma \ref{lem:adjacent}: $v=u_j$ and $v' = u_{j+1}$, for some $1 \leq j < n$. Then, insert in $\mathcal B(\mathcal A)$ an edge $((u_j,u_{j+1}),(u_i,u_{i+1}))$.
\end{proof}

We describe our minimization algorithm as Algorithm \ref{alg:minimize}. In Line \ref{algline:sort}, we compute in linear time the Wheeler order on $\mathcal A$ by using the algorithm of Alanko et al. \cite{alankosoda2020}. In line \ref{algline:border} we compute the border graph  $(B,Z) = \mathcal B(\mathcal A)$ of $\mathcal A$. This is done in linear time, see Lemma \ref{lem: border graph}. In Lines 3-5, we mark base-case nodes: for every pair of adjacent nodes, if they are not both final/not final, or if their sets of outgoing labels are not equal then they cannot be Myhill-Nerode equivalent (and thus $\equiv_\mathcal{A}$-equivalent). This step takes time proportional to the number of edges of $\mathcal A$. In Line 
\ref{algline:mark reachable} we perform a linear-time visit of $\mathcal B(\mathcal A)$ starting from the nodes that have been marked in Lines 3-5. During this visit, we mark every visited node. 
In Lines 7-9 we compute the equivalence classes of $\equiv_\mathcal{A}$. In Line 8, the predicate $marked((u_i,u_{i+1}))$ returns \texttt{true} if and only if $(u_i,u_{i+1}) \in B$ has been marked in the previous lines. Procedure make\_equivalent($u_i, u_{i+1}$) at Line \ref{algline:equiv} records that nodes $u_i, u_{i+1}$ belong to the same equivalence class of $\equiv_\mathcal{A}$. 
To conclude, at Line \ref{algline:return} we return the quotient automaton $\mathcal A_{/\equiv_\mathcal{A}}$ which, by Theorem \ref{thm:linear minimize}, is the minimum WDFA recognizing $\mathcal L(\mathcal A)$. The  WDFA $\mathcal A_{/\equiv_\mathcal{A}}$ can be computed in linear time by collapsing each equivalence class of $\equiv_\mathcal{A}$ (intervals in Wheeler order) into one state and deduplicating equally-labeled edges exiting the same equivalence class.

\begin{algorithm}[h!]
    \caption{minimize($\mathcal A$)}\label{alg:minimize}
	\SetKwInOut{Input}{input}
	\SetKwInOut{Output}{output}
	\SetSideCommentLeft
	\LinesNumbered
	
	\Input{A WDFA $\mathcal A$}
	\Output{The minimum WDFA $\mathcal A'$ such that $\mathcal L(\mathcal A) = \mathcal L(\mathcal A')$}
    
    \BlankLine
    
    $<\  \leftarrow$ sort($\mathcal A$)\tcp*{Compute Wheeler order $u_1 < \dots < u_n$ of $\mathcal A$} \label{algline:sort}
    $(B,Z) \leftarrow \mathrm{border\_graph}(\mathcal A, <)$\tcp*{Compute border graph $\mathcal B(\mathcal A)$ of  $\mathcal A$} \label{algline:border}

    \BlankLine

    \For{$(u_i,u_{i+1}) \in B$}{
    
        \If{$out(u_i) \neq out(u_{i+1}) \vee final(u_i) \neq final(u_{i+1})$}{
        
            mark($(u_i,u_{i+1})$)\tcp*{Mark a "base-case" node of $\mathcal B(\mathcal A)$} \label{algline:mark base}
        
        }
    
    }

    \BlankLine
    
    mark\_reachable($B,Z$)\tcp*{Propagate "base-case" marked nodes} \label{algline:mark reachable}
    
    \BlankLine
    
    \For{$i = 1, \dots, n-1$}{
    
        \If{$\Big(\mathtt{not}\ \mathrm{marked}\big((u_i,u_{i+1})\big)\Big) \wedge \lambda(u_i) = \lambda(u_{i+1})$}{
        
            make\_equivalent($u_i, u_{i+1}$)\tcp*{Record that $u_i \equiv_{\mathcal A} u_{i+1}$} \label{algline:equiv}
        
        }
    
    }
    
    \BlankLine

    \Return $\mathcal A_{/\equiv_{\mathcal A}}$\tcp*{Compute and return quotient automaton} \label{algline:return}

\end{algorithm} 

In Figure \ref{fig:example} we pictorially show how Algorithm \ref{alg:minimize} minimizes a WDFA.

\begin{theorem}\label{thm:linear minimize}
	Let $\mathcal{A}$ be a WDFA. Algorithm \ref{alg:minimize} computes the minimum WDFA recognizing $ \mathcal{L(A)} $ in $O(|\mathcal A|)$ time.
\end{theorem}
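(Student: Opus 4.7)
The plan is to prove two things: first, that after the marking phase, a border $(u_i,u_{i+1}) \in B$ is marked if and only if $u_i \not\approx_{\mathcal A} u_{i+1}$; second, that every line of Algorithm~\ref{alg:minimize} runs in $O(|\mathcal A|)$ total time. Correctness of the returned automaton then follows from Lemma~\ref{lem:minimum WDFA}, because the pairs left unmarked in lines 7--9 are precisely the adjacent pairs sharing the incoming label that are $\approx_{\mathcal A}$-equivalent, i.e.\ exactly the $\equiv_{\mathcal A}$-equivalent pairs.

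For the ``only if'' direction (marked $\Rightarrow$ not equivalent), I would proceed by induction on the order in which marks are produced. A base-case mark at line~\ref{algline:mark base} is witnessed by the empty string (if $final(u_i) \neq final(u_{i+1})$) or by a single letter in the symmetric difference of $out(u_i)$ and $out(u_{i+1})$. For a mark inserted during the propagation of line~\ref{algline:mark reachable}, there is an edge $((u_j,u_{j+1}),(u_i,u_{i+1})) \in Z$ with $(u_j,u_{j+1})$ already shown to be a non-equivalent pair; by the definition of $Z$ and Lemma~\ref{lem:borders}, any distinguishing suffix $\beta$ for $(u_j,u_{j+1})$ yields $\lambda(u_j)\cdot\beta$ as a distinguishing suffix for $(u_i,u_{i+1})$.

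The harder direction is ``if'' (not equivalent $\Rightarrow$ marked). Suppose $u_i\not\approx_{\mathcal A} u_{i+1}$ with $\lambda(u_i)=\lambda(u_{i+1})$, and let $\alpha=a_1a_2\cdots a_m$ be a shortest distinguishing string. Define $(v_0,v_0'):=(u_i,u_{i+1})$ and, while both transitions exist, $(v_k,v_k'):=(\delta(v_{k-1},a_k),\delta(v_{k-1}',a_k))$. I would first argue $v_k\neq v_k'$ for every $k<m$ along this walk, since otherwise the common state would produce the same acceptance on $a_{k+1}\cdots a_m$, contradicting that $\alpha$ distinguishes $(u_i,u_{i+1})$. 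By Lemma~\ref{lem:adjacent}, each such $(v_k,v_k')$ is a pair of Wheeler-adjacent states; by input-consistency $\lambda(v_k)=\lambda(v_k')=a_k$, so each $(v_k,v_k')$ lies in $B$, and by construction of $Z$ there is an edge from $(v_k,v_k')$ back to $(v_{k-1},v_{k-1}')$. Finally, the last step of the walk must produce a base-case mark: either one of $\delta(v_{m-1},a_m),\delta(v_{m-1}',a_m)$ is undefined (forcing $out(v_{m-1})\neq out(v_{m-1}')$), or both are defined and $(v_m,v_m')$ differ in finality (since $\alpha$ is distinguishing and minimal). Either way the walk's endpoint is a base-case mark, and the reachability pass of line~\ref{algline:mark reachable} propagates it backward along $Z$ all the way to $(u_i,u_{i+1})$.

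The complexity analysis is then a direct accounting, and I expect it to be the easy part. Sorting into Wheeler order is linear by the cited result; the border graph has at most $n-1$ vertices and $n-1$ edges and is built in $O(|\mathcal A|)$ time by Lemma~\ref{lem: border graph}; the base-case loop inspects each adjacent Wheeler pair together with its outgoing edges, for $O(|\mathcal A|)$ work in total (outgoing labels can be sorted once in linear time and compared sequentially); the reachability propagation is a standard graph traversal on $\mathcal B(\mathcal A)$ in $O(|B|+|Z|)=O(n)$ time; and computing $\mathcal A_{/\equiv_{\mathcal A}}$ amounts to contracting intervals in the Wheeler order and deduplicating equally-labeled outgoing edges from each contracted class, which is linear using bucket sort. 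The main obstacle in the whole argument is the ``if'' direction above, specifically verifying that the walk $(v_k,v_k')$ stays entirely within $B$ and terminates at a base-case marked vertex rather than escaping the border graph prematurely; this is exactly what Lemma~\ref{lem:adjacent} and the $out(\cdot)$ check in the base case are designed to guarantee.
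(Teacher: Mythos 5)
Your proposal is correct and follows essentially the same route as the paper's proof: reduce correctness to showing that an adjacent pair with equal incoming label is marked iff it is not Myhill--Nerode equivalent, prove the forward implication via Lemma~\ref{lem:borders}, and prove the converse by walking a shortest distinguishing string to a base-case border using Lemma~\ref{lem:adjacent}, with the same case split on whether the last transition is undefined (an $out(\cdot)$ mismatch) or leads to states of differing finality. Your explicit checks that the intermediate pairs stay distinct and remain in $B$ are details the paper leaves implicit, but the argument is the same.
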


\begin{proof}
Complexity follows from the algorithm's description: all steps take linear time.

By Lemma \ref{lem:minimum WDFA}, our claim will follow if we prove that $ u_i \not \equiv_\mathcal{A} u_{i + 1} $ if and only if either $ \lambda (u_i) \not = \lambda (u_{i + 1}) $ or $ (u_i, u_{i + 1}) $ is marked in $\mathcal B(\mathcal A)$.

($\Leftarrow$) Suppose that $\lambda(u_i) \neq \lambda(u_{i+1})$. Then, $u_i \not \equiv_\mathcal{A} u_{i + 1}$ follows immediately by definition of $\equiv_\mathcal{A}$. The other case to consider is when
$(u_i, u_{i + 1})$ is marked. 
Then, by the definition of $\mathcal B(\mathcal A)$ this means that there exists a pair $(u_j, u_{j + 1})$ (possibly, $i=j$) that was marked in Line \ref{algline:mark base} (in particular, $u_j \not \approx_\mathcal{A} u_{j + 1}$ and so $u_j \not \equiv_\mathcal{A} u_{j + 1}$) such that $(u_i, u_{i + 1})$ is reachable from $(u_j, u_{j + 1})$ in $\mathcal B(\mathcal A)$. In turn, by the definition of $\mathcal B(\mathcal A)$ this implies that there exists a string $\alpha$ such that $\hat\delta(u_i,\alpha) = u_j$ and $\hat\delta(u_{i+1},\alpha) = u_{j+1}$. By  Lemma \ref{lem:borders}, 
this implies that $u_i \not \equiv_\mathcal{A} u_{i + 1}$.

($\Rightarrow$) Conversely, suppose $ u_i \not \equiv_\mathcal{A} u_{i + 1} $. Then, either $\lambda(u_i) \neq \lambda(u_{i+1})$ (and the claim follows), or $\lambda(u_i) = \lambda(u_{i+1})$. In the latter case, by definition of $\equiv_\mathcal{A}$ it must be $ u_i \not \approx_\mathcal{A} u_{i + 1}$. 
Then, there must exist two states $v,v'$ such that $\hat\delta(u_i,\alpha) = v$ and $\hat\delta(u_{i+1},\alpha) = v'$ for some string $\alpha$, with either $out(v) \neq out(v')$ or $final(v) \neq final(v')$ (in particular, $v\neq v'$). Indeed, let $ \alpha' $ be a shortest string witnessing that $ u_i \not \approx_\mathcal{A} u_{i + 1}$. If $ \hat\delta(u_i,\alpha') $ and $ \hat\delta(u_{i+1},\alpha') $ are both defined, let $ v = \hat\delta(u_i,\alpha') $, $  v' = \hat\delta(u_{i+1},\alpha') $ and $ \alpha = \alpha' $ (in this case $final(v) \neq final(v')$). If exactly one between $ \hat\delta(u_i,\alpha') $ and $ \hat\delta(u_{i+1},\alpha') $ is not defined, then $ \alpha' $ is not the empty string, so we can write $ \alpha' = \alpha'' a $ with $ \alpha'' \in \Sigma^* $ and $ a \in \Sigma $, where both $ \hat\delta(u_i,\alpha'') $ and $ \hat\delta(u_{i+1},\alpha'') $ are defined (by the minimality of $ \alpha' $), so let $ v = \hat\delta(u_i,\alpha'') $, $  v' = \hat\delta(u_{i+1},\alpha'') $ and $ \alpha = \alpha'' $ (in this case $out(v) \neq out(v')$). 
From Lemma \ref{lem:adjacent}, $v$ and $v'$ must be adjacent in Wheeler order, i.e. $v=u_j$ and $v'=u_{j+1}$ for some $1 \leq j < n$.
This implies that (i) $(u_j,u_{j+1})$ is marked in Line \ref{algline:mark base} and (ii)  $(u_i,u_{i+1})$ is reachable from  $(u_j,u_{j+1})$ in $\mathcal B(\mathcal A)$. Finally, (i) and (ii) imply that $(u_i,u_{i+1})$ is marked during the visit of $\mathcal B(\mathcal A)$ in Line \ref{algline:mark reachable}.
\end{proof}

\begin{figure}
   \begin{subfigure}{0.24\textwidth}
   \centering
   \scalebox{0.4}{
	\begin{tikzpicture}[shorten >=1pt,node distance=2cm,on grid,auto]
	\node[state, initial, fill = purple] (1) {$1$};
	\node[state, fill= orange] (2) [above of=1] {$2$};
	\node[state, accepting, fill = green, very thick] (10) [right of = 2] {$10$};
	\node[state, fill = green] (9) [right of = 1] {$9$};
	\node[state, fill = orange] (4) [right of = 10] {$4$};
	\node[state, accepting, fill = green, very thick] (11) [right of = 4] {$11$};
	\node[state, fill = orange] (5) [right of = 11] {$5$};
	\node[state, accepting, fill = cyan, very thick] (22) [right of = 5] {$22$};
	\node[state, fill = pink] (19) [below of = 1] {$19$};
	\node[state, fill = orange ] (7) [right of = 19] {$7$};
	\node[state, fill = orange] (3) [right of = 9] {$3$};
	\node[state, fill = green] (16) [below of = 7] {$16$};
	\node[state, fill = green] (13) [right of = 7] {$13$};
	\node[state, accepting, fill = green, very thick] (12) [right of = 3] {$12$};
	\node[state, accepting, fill = orange, very thick] (6) [right of = 12] {$6$};
	\node[state, accepting, fill = cyan, very thick] (23) [right of = 6] {$23$};
	\node[state, fill = orange] (8) [below of = 16] {$8$};
	\node[state, fill = cyan] (26) [left of = 8] {$26$};
	\node[state, fill = pink] (20) [below of = 13] {$20$};
	\node[state, fill = green] (17) [right of = 20] {$17$};
	\node[state, fill = green] (14) [below of = 6] {$14$};
	\node[state, accepting, fill = cyan, very thick] (24) [below of = 12] {$24$};
	\node[state, fill = cyan] (27) [below of = 20] {$27$};
	\node[state, accepting, fill = pink, very thick] (21) [right of = 27] {$21$};
	\node[state, fill = green] (18) [right of = 21] {$18$};
	\node[state, fill = green] (15) [right of = 14] {$15$};
	\node[state, accepting, fill = cyan, very thick] (25) [below of = 15] {$25$};

	\path[->]
	(1) edge node {a} (2)
	(2) edge node {b} (10)
	(1) edge node {b} (9)
	(9) edge node {a} (4)
	(4) edge node {b} (11)
	(11) edge node {a} (5)
	(5) edge node {d} (22)
	(1) edge node {c} (19)
	(19) edge node {a} (7)
	(7) edge node {a} (3)
	(3) edge node {b} (11)
	(19) edge node{b} (16)
	(16) edge node {b} (13)
	(13) edge node {b} (12)
	(12) edge node {a} (6)
	(6) edge node {d} (23)
	(19) edge node {d} (26)
	(26) edge node {a} (8)
	(8) edge node {c} (20)
	(20) edge [bend left] node {b} (17)
	(17) edge [bend left] node {c} (20)
	(17) edge node {b} (14)
	(14) edge node [label=above:d] {} (24)
	(26) edge [bend right] node [label=below:d] {} (27)
	(27) edge node {c} (21)
	(21) edge [bend left] node {b} (18)
	(18) edge [bend left] node {c} (21)
	(18) edge node {b} (15)
	(15) edge node {d} (25)
	;
	\end{tikzpicture}
	}
   \end{subfigure}
   \hfill \hfill
   \begin{subfigure}{0.15\textwidth}
       \centering
    \hspace{-20mm}   
    \scalebox{0.4}{
    \begin{tikzpicture}[shorten >=1pt,node distance=2cm,on grid,auto]
	\tikzstyle{every state}=[fill={rgb:black,1;white,10}]
	
	\node[rectangle, draw, orange] (1011) {$ (10, 11) $};
	\node[rectangle, draw, cyan] (23) [right of=1011] {$ (2, 3)$};
	\node[rectangle, draw, orange] (1314) [below of =1011] {$ (13, 14) $};
	\node[rectangle, draw, orange] (1617) [right of =1314] {$ (16, 17) $};
	\node[rectangle, draw] (2223) [below of = 1314] {$ (22, 23) $};
	\node[rectangle, draw, orange] (56) [right of =2223]{$ (5, 6) $};
	\node[rectangle, draw, cyan] (1112) [right of=56]{$ (11, 12) $};
	\node[rectangle, draw] (2425) [below of=2223]{$ (24, 25) $};
	\node[rectangle, draw] (1415) [right of=2425]{$ (14, 15) $};
	\node[rectangle, draw, cyan] (1718) [right of=1415]{$ (17, 18) $};
	\node[rectangle, draw, orange] (2021) [right of=1718]{$ (20, 21) $};
	\node[rectangle, draw] (34) [right of=23]{$ (3, 4) $};
	\node[rectangle, draw, orange] (45) [right of=34]{$ (4, 5) $};
	\node[rectangle, draw, orange] (67) [right of=45]{$ (6, 7) $};
	\node[rectangle, draw, orange] (1920) [right of=1617]{$ (19, 20) $};
	\node[rectangle, draw, orange] (78) [right of=1920]{$ (7, 8) $};
	\node[rectangle, draw, orange] (910) [right of=78]{$ (9, 10) $};
	\node[rectangle, draw, orange] (1213) [right of=1112]{$ (12, 13) $};
	\node[rectangle, draw, orange] (1516) [right of=1213]{$ (15, 16) $};
	\node[rectangle, draw] (2324) [right of=2021]{$ (23, 24) $};
	\node[rectangle, draw, orange] (2526) [below of=2425]{$ (25, 26) $};
	\node[rectangle, draw, orange] (2627) [right of=2526]{$ (26, 27) $};
	
	\path[->]
	(1011) edge [very thick, blue] node {} (23)
	(1314) edge [very thick, blue] node {} (1617)
	(1617) edge [very thick, blue] node {} (1920)
	(2223) edge [very thick, blue] node {} (56)
	(56) edge [very thick, blue] node {} (1112)
	(2425) edge [very thick, blue] node {} (1415)
	(1415) edge [very thick, blue] node {} (1718)
	(1718) edge [very thick, blue, bend left] node {} (2021)
	(2021) edge [very thick, blue, bend left] node {} (1718)
	;
	\end{tikzpicture}
	}
   \end{subfigure}
   \hfill
   \hspace{-23mm}
   \begin{subfigure}{0.38\textwidth}
   	\centering
	\scalebox{0.4}{
		\begin{tikzpicture}[shorten >=1pt,node distance=2cm,on grid,auto]
	\tikzstyle{every state}=[fill={rgb:black,1;white,10}]
	\node[state, initial, fill = purple] (1) {$1$};
	\node[state, fill= orange] (2) [above of=1] {$2$};
	\node[state, accepting, fill = green, very thick] (10) [right of = 2] {$10$};
	\node[state, fill = green] (9) [right of = 1] {$9$};
	\node[state, accepting, fill = green, very thick] (11) [right of = 4] {$11$};
	\node[state, fill = orange] (5) [right of = 11] {$5$};
	\node[state, fill = pink] (19) [below of = 1] {$19$};
	\node[state, fill = orange ] (7) [right of = 19] {$7$};
	\node[state, fill = orange] (34) [right of = 9] {$ \{3, 4 \} $};
	\node[state, fill = green] (16) [below of = 7] {$16$};
	\node[state, fill = green] (13) [right of = 7] {$13$};
	\node[state, accepting, fill = green, very thick] (12) [right of = 3] {$12$};
	\node[state, accepting, fill = orange, very thick] (6) [right of = 12] {$6$};
	\node[state, fill = orange] (8) [below of = 16] {$8$};
	\node[state, fill = cyan] (26) [left of = 8] {$26$};
	\node[state, fill = pink] (20) [below of = 13] {$20$};
	\node[state, fill = green] (17) [right of = 20] {$17$};
	\node[state, fill = green] (1415) [below of = 6] {$ \{14, 15 \} $};
	\node[state, fill = cyan] (27) [below of = 20] {$27$};
	\node[state, accepting, fill = pink, very thick] (21) [right of = 27] {$21$};
	\node[state, fill = green] (18) [right of = 21] {$18$};
	\node[state, accepting, fill = cyan, very thick] (22232425) [below of = 15] {$ \{ 22, 23, 24, 25 \} $};

	\path[->]
	(1) edge node {a} (2)
	(2) edge node {b} (10)
	(1) edge node {b} (9)
	(9) edge node {a} (34)
	(34) edge node {b} (11)
	(11) edge node {a} (5)
	(5) edge [bend left] node {d} (22232425)
	(1) edge node {c} (19)
	(19) edge node {a} (7)
	(7) edge node {a} (34)
	(34) edge node {b} (11)
	(19) edge node{b} (16)
	(16) edge node {b} (13)
	(13) edge node {b} (12)
	(12) edge node {a} (6)
	(6) edge [bend left] node {d} (22232425)
	(19) edge node {d} (26)
	(26) edge node {a} (8)
	(8) edge node {c} (20)
	(20) edge [bend left] node {b} (17)
	(17) edge [bend left] node {c} (20)
	(17) edge node {b} (1415)
	(1415) edge node  {d} (22232425)
	(26) edge [bend right] node [label=below:d] {} (27)
	(27) edge node {c} (21)
	(21) edge [bend left] node {b} (18)
	(18) edge [bend left] node {c} (21)
	(18) edge node {b} (1415)
	;
	\end{tikzpicture}
	}
   \end{subfigure}
   \caption{\footnotesize \emph{Left}: a sorted WDFA $ \mathcal{A}$ (node labels indicate the Wheeler order). States reached by the same label have the same color. \emph{Center}: the border graph $ \mathcal B(\mathcal A) $ built at Line 2 of Algorithm \ref{alg:minimize}. Nodes marked at Line \ref{algline:mark base} of the algorithm are orange, nodes marked at Line \ref{algline:mark reachable} are light blue. \emph{Right}: the minimum WDFA recognizing $ \mathcal{L(A)}$. Borders not marked (colored) in $ \mathcal B(\mathcal A) $ have been collapsed. }\label{fig:example}
\end{figure}
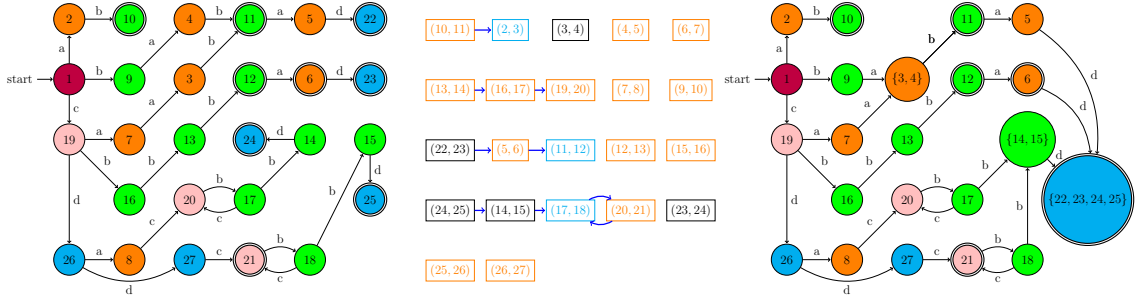

\Section{Experimental Results}

We implemented our algorithm and made the source available at the repository \url{github.com/nicolaprezza/dBg-min}.
Our tool takes as input a \texttt{fasta} or \texttt{fastq} dataset, builds the corresponding de Bruijn graph in BOSS format \cite{BOSS}, and runs our minimization algorithm on it. 
The tool also integrates an implementation of Alanko et al.'s strategy for pruning unnecessary dummy nodes \cite{alanko2021buffering} (disabled by default).
We tested our implementation on de Bruijn graphs of order $k=28$ built upon five real DNA datasets, see Table \ref{tab:datasets}: three collections of genomes downloaded from the Pizza\&Chili corpus (\url{pizzachili.dcc.uchile.cl/repcorpus.html} ---  \emph{Saccharomyces cerevisiae}, \emph{Haemophilus influenzae}, and \emph{Saccharomyces paradoxus}), and short reads sequenced from \emph{Escherichia coli} (\url{www.ebi.ac.uk/ena/browser/view/ERR022075}) and \emph{Human} (\url{www.ebi.ac.uk/ena/browser/view/SRX7829390}) genomes. 
All experiments were run on a workstation with 128 GiB of RAM and a Intel(R) Xeon(R) W-2245 CPU, running Ubuntu Linux and using one thread.

The three \texttt{fasta} datasets contain just one sequence and thus the BOSS representation \cite{BOSS} of their de Bruijn graph has at most $k$ dummy nodes. As a result, the pruning algorithm of Alanko et al. \cite{alanko2021buffering} does not compress these graphs. The remaining two \texttt{fastq} datasets, on the other hand, generate de Bruijn graphs that, as we show in Table \ref{tab:results}, are already compressible with Alanko et al.'s algorithm. Table \ref{tab:results} shows the results we obtained by running our WDFA minimization algorithm on the de Bruijn graphs built from the original datasets, and --- just on the \texttt{fastq} datasets --- on the de Bruijn graphs pruned with Alanko et al.'s algorithm.

\begin{table}[h!]
\small
    \centering
    \begin{tabular}{|c|c|c|}
    \hline
    \texttt{dataset} & \texttt{seqs} & \texttt{bases} \\\hline
    cere.fasta & 1 & 461,286,644  \\
    influenza.fasta & 1 & 154,804,605 \\
    para.fasta & 1 & 429,265,758 \\
    ecoli.fastq & 45,440,200 & 4,589,460,200 \\
    human.fastq & 63,917,134 & 6,455,630,534  \\
    \hline
    \end{tabular}
    \caption{\footnotesize Datasets on which we built the de Bruijn graphs used as inputs for our algorithm. The columns show, respectively: name of the dataset, number of input sequences, and number of DNA bases (i.e. characters on the alphabet $\{A,C,G,T\}$).}
    \label{tab:datasets}
\end{table}

\begin{table}[h!]
\small
    \centering
    \begin{tabular}{|c|c|c|c|c|c|}
    \hline
    \texttt{dataset}  & \texttt{in ($\times 10^6$)} & \texttt{out ($\times 10^6$)} & reduction & \texttt{time (s)} & nodes/s ($\times 10^6$) \\\hline
    cere.fasta & 19.004 & 15.756 & 17.1\% & 17 & 1.118 \\
    influenza.fasta & 6.469 & 4.792 & 25.9\% & 5 & 1.294 \\
    para.fasta & 28.178 & 22.556 & 19.9\% & 26 & 1.084 \\
    ecoli.fastq & 449.92 & 220.47 & 51\% & 398 & 1.130\\
    human.fastq & 650.51 & 438.68 & 32.6\% & 600 & 1.084 \\
    \hline\hline
    ecoli\_pruned & 317.173 & 201.940 & 36\% & 291 & 1.089 \\
    human\_pruned & 449.991 & 387.627 & 13.8\% & 431 & 1.044 \\\hline
    \end{tabular}
    \caption{\footnotesize Performance of our minimization algorithm on the de Bruijn graphs built over the datasets of Table \ref{tab:datasets} (first four rows) and on the pruned de Bruijn graphs \cite{alanko2021buffering} of the two fastq datasets (last two rows). The order of all de Bruijn graphs is $k=28$. The columns show, respectively: name of the dataset, number of nodes in the input de Bruijn graph, number of nodes in the output (minimized) de Bruijn graph, percentage of nodes removed by the minimization algorithm, running time (construction of the de Bruijn graph in BOSS format is not counted towards the running time), and number of processed nodes per second.}
    \label{tab:results}
\end{table}

\paragraph{Discussion} Our implementation proved to be extremely fast, processing over one million nodes per second. The results show that WDFA minimization is indeed a relevant compression strategy for de Bruijn graphs: our algorithm reduced the number of nodes of the original de Bruijn graphs from 17.1\% to 51\%, and from 13.8\% to 36\% when the graph was previously pre-processed with the pruning algorithm of Alanko et al. \cite{alanko2021buffering}. Interestingly, these results indicate that, while the two algorithms target mostly different sources of redundancy, minimization targets some dummy nodes as well. The combination of the two algorithms reduced the number of nodes by 55.1\% on the \texttt{ecoli.fastq} dataset, and by 40.4\% on the \texttt{human.fastq} dataset. We leave it as an interesting open problem to show whether or not the combination of the two algorithms is optimal, in the sense that it generates the smallest WDFA containing all (and only the) labeled paths of the input de Bruijn graph. 

\bibliographystyle{IEEEbib}
\bibliography{main}
\end{document}